\documentclass[11pt]{amsart}


\usepackage[mathscr]{eucal}
\usepackage{amsmath,amssymb,amscd,amsthm}
\usepackage{color}
\usepackage{epsfig}
\newtheorem{Theorem}{Theorem}
\newtheorem{Lemma}{Lemma}

\newtheorem{Corollary}{Corollary}

\newtheorem{Proposition}{Proposition}
\theoremstyle{definition}

\theoremstyle{plane}

\def \beq{ \begin{equation} }
\def \eeq{\end{equation}}


\title{On a smooth scalar field characterizing \\ the risk  of exposure to methyl-mercury due \\ to non intentional
consumption \\ of shark meet in  males of M\'exico City's metropolitan area}

 \begin{document}

\maketitle

\markboth{Elizalde Ram{\'\i}rez, Flores Garc{\'\i}a, Ram{\'\i}rez Romero and  Reyes-Victoria}{Risk characterization of exposure to methyl-mercury}

\vspace{-0.5cm}

\author{
\begin{center}
{\rm LAURA ELIZALDE RAM\'IREZ \\
        Universidad Aut\'onoma Metropolitana Iztapalapa \\
 Departamento de Hidrobiolog{\'\i}a  \\
    Unidad Iztapalapa, CP 09340,
 Cd. de M\'exico,   M\'exico \\
         {\tt laura-eli51@yahoo.com.mx}}\\
                \bigskip
   {\rm EDSON MISSAEL FLORES GARC\'IA, \\
            Universidad Aut\'onoma Metropolitana \\
 Departamento de Matem\'aticas \\
        Unidad Iztapalapa, CP 09340,
 Cd. de M\'exico,   M\'exico \\
         {\tt flores.edson@hotmail.com}} \\
         \bigskip
     {\rm PATRICIA RAM\'IREZ ROMERO \\
        Universidad Aut\'onoma Metropolitana Iztapalapa \\
 Departamento de Hidrobiolog{\'\i}a  \\
    Unidad Iztapalapa, CP 09340,
 Cd. de M\'exico,   M\'exico \\
         {\tt pattdf@gmail.com}} \\
                \bigskip
          {\rm J. GUADALUPE REYES VICTORIA \\
            Universidad Aut\'onoma Metropolitana \\
 Departamento de Matem\'aticas \\
        Unidad Iztapalapa, CP 09340,
 Cd. de M\'exico,   M\'exico \\
         {\tt revg@xanum.uam.mx}}
\end{center}}

 \bigskip
\begin{center}
\today
\end{center}

\begin{abstract}
In this article, we obtain, through statistical and numerical
methods, a smooth function in the variables of the life stage and the
concentration, which estimates the risk of exposure of methylmercury
due to the unintentional consumption of shark in men from Mexico City.
With methods of the Theory of Singularities and Dynamical Systems,
the stability of this risk function was shown by analyzing the associated
vector field. The region of risk was obtained in the variables as mentioned above, and the average risk in the whole region was calculated, which turns out to be a high index. The associated risk surface is a Hadamard surface embedded in the three-dimensional space $\mathbb{R}^3$, and the points where the curvature is zero determine critical ages important for the risk in men.
\end{abstract}

\footnote*{MSC: Primary XXX, Secondary XXX}

\smallskip

\footnote*{Keywords: Risk smooth function, Morse stable function,  Risk vector field, Hadamard surface.}

\section{Introduction}\label{sec:intro}

We continue here the study begun by Elizalde {\it et al.} in (Elizalde, \cite{Elizalde 2}) of characterizing the risk  of exposure to methyl-mercury due  to non intentional  consumption  of shark meet in  of M\'exico City's metropolitan area, there for females and here for males.

As is shown in (\cite{Elizalde 2}), studies in Mexico have reported methyl-mercury concentrations in commercial sharks from 0.27 to 3.33 ppm, but also, the potential substitution of fish meat with shark meat in edible products (Ram{\'\i}rez-Romero {\it et al.}, \cite{Ramirez}). This happens because there are no morphological features that can help to differentiate sharks meat from other products, since the former are purchased processed (without fins or head, as fillet, smoked or as ground meat). 104 shark species has been reported in Mexico; of these, 55 inhabit the Pacific and the rest are distributed in the Gulf of Mexico and the Caribbean (Espinosa-P\'erez {\it et al.}, \cite{Espinosa}). Their economic importance is based on the use of its meat and fins. One of the main distribution sites,and the biggest wholesale market for these products in Mexico City (CDMX), is the Central de Abasto de Pescados y Mariscos, here fish meat is bought fresh, ground, dried-salted (like cod), smoked, in pieces or in fillets.To find out the frequency of the substitution of fish meat for shark meat, Elizalde (2018) bought different fish presentations in the Fish and Seafood market of Central de Abasto, and analyzed their identity through Polymerase Chain Reaction (PCR) using universal shark oligonucleotides. Fifty-three samples were analyzed,  of which 60.37 $\%$ were positive for the replacement of shark species.
A lower limit of the reference dose of 95 $\% $ has been selected at an effect level of the obtained 5 $\%$ . Application of a hazard coefficient (RQ) power model ($RQ> 1$) to response data based on previous studies conducted in the Faroe Islands (USEPA, \cite{USEPA2}). The risk quotient (RQ) calculation indicated a high risk for the analyzed population. Therefore, the objective of this study was to calculate the health risk for males of the Mexico City metropolitan area from unintentional exposure to methylmercury through the consumption of shark.
A less biased approach to risk assessment uses uncertainties analysis to assess the degree of confidence that can be given to risk estimation. But nevertheless, meat mathematically when studying some phenomenon of nature, such as pollution and the effects of this cause on biota and man, it must be understood that observations are subject to errors. Trying to formalize this phenomenon by means of a formula, should be understood as an approximation of reality using numerical and qualitative methods (Reyes, \cite{Reyes})

\section{Methods}\label{sec:methods}

\subsection{Data acquiring}\label{subsec:data-adquiring}
The analyzed samples were collected from Mexico City’s Central de Abasto. The sampled products included “fish” meat to make ceviche, meat to make fish broth, meat to make fish quesadillas, smoked fillet, inexpensive steak (sea bass, Nile fish white fillet, catfish, etc.) and breaded fillet.

Positive control shark samples (Carcharhinus. limbatus, Carcharhinusleucas, Carcharhinusfalciformis, Galeocerdocuvie, Isurusoxyrinchus) were donated by the UNAM Genetics Laboratory. Negative control samples were fish from different species: red snapper (Lutjanuscampechanus), marlin (Istiophoridaesp), catfish (Siluriformes), Nile fish (Oreochromismossambicus), sea bass (Centropomusundecimalis) and salmon (Oncorhynchussp).

\subsection{Survey design}\label{subsec:survey}

A non-probabilistic sampling was done, also called discretionary sampling (M\'endez, \cite{Mendez}), in order to identify the population characteristics and consumption habits (quantity and frequency of fish products consumption). The surveys application sites were selected markets in various municipalities of Mexico City, such as Iztapalapa, Xochimilco, Iztacalco, Coyoac\'an, and Benito Ju\'arez; and in municipalities of the State of Mexico (Ixtapaluca and Nezahualcoyotl). The information collected was the frequency of fish consumption, portion size (weight in grams), species and presentation type; age, gender and weight of the respondent and his entire family. A total of 777 surveys were applied.

\subsection{Modeling the dose}\label{subsec:mod-dose}

The average daily dose during the lifetime (LADD) or the daily chronic ingestion (CDI, chronic daily intake) is a function of the average concentrationof the contaminant and the ingestion rate (by oral route). The parameters used (body weight, age, sex, consuming preferences andfrequency) were obtained from the aforementioned surveys. Additionally, the average life expectancy of Mexican male consumers (78 years) was obtained from national statistics available on line (INEGI, \cite{INEGI}).
The total dose and the average daily dose (ADD) were calculated with following equations (Evans {\it et al.}, \cite{Evans}):
\begin{equation}
\rm{Total \, \,  dose} = \rm{(concentration) (ingestion) (duration) (frequency) }
\end{equation}
\begin{eqnarray}
\rm{Average \, \, Daily \, \, Dose} &=& \rm{(Total  \, dose) / (Body  \, weight  \times Life  \,expectancy) } \nonumber \\
&& \rm{ (mg / kg-day) } \nonumber \\
\end{eqnarray}

  The total dose for three concentrations of methylmercury were calculated; these were obtained from Ram{\'\i}rez-Romero {\it et al.} \cite{Ramirez} and correspond to the minimum (0.27 mg / Kg), the average (2.43 mg / Kg) and the maximum (3.33 mg / Kg) methylmercury concentrations used in this work.

\subsection{Reference dose}\label{subsec:dose}

The quantitative health risk assessment of a non-carcinogenic agent is based on a reference dose, which is an estimate (with uncertainty spanning an order of magnitude of 10) of a daily exposure where sensitive human subgroups are included. The chosen level was a lower limit of reference dose of 95$\%$ at an effect level of 5 $\%$ obtained by applying a K power model ($K> 1$) to dose response data based on previous studies conducted in the Faroe Islands (USEPA, \cite{USEPA2}). The reference dose is 0.0001 mg / kg / day for men of reproductive age, children and older adults; and, 0.0003 mg / kg / day for the adult population.

\subsection{Calculation of risk for unintentional consumption of shark meat}\label{sec:Calculation}

For the calculation of weekly and monthly consumption, we used the United States Environmental Protection Agency equation (USEPA, \cite{USEPA2}):
\begin{equation}
CR_{mm}= \frac{CR_{lim}  \times T_{ap}}{MS}
\end{equation}
where,
\begin{eqnarray*}
CR_{mm}& =& \rm{ maximum \, \, consumption \, \, allowed  \, \, in  \, \, fish  \, \, portion  \, \, (meals / month)} \nonumber \\
CR_{lim} &=& \rm{Maximum \, \, consumption \, \,  allowed\, \,  in\, \,  fish \, \, portion\, \,  (Kg / day)}  \nonumber \\
T_{ap} & =& \rm{average \, \, period  \, \, 365.25 \, \, days / 12 months\, \,  = \, \, 30.44 \, \, days / month}  \nonumber \\
MS& =& \rm{fish \, \, portion \, \, weight.}  \nonumber  \\
\end{eqnarray*}

This, for the sensitive population made up of men of reproductive age 12 to 50 years old and children under 12 years of age.

On the other hand, to determine the maximum consumption allowed for the sensitive population, in fish portions, in kilograms per day, with information on women
of reproductive age from the aforementioned survey, we use the equation,
\begin{equation}
CR_{lim}=\frac{RfD \times BW}{C_m}
\end{equation}
where,
\begin{eqnarray*}
BW& =& \rm{Consumer \, \,body \, \, weight (Kg)} \nonumber \\
Cm & =& \rm{Concentration \, \,of \, \,mercury \, \,in \, \,fish \, \,species \, \,(mg / Kg)} \nonumber \\
RfD & =& \rm{0.0001 \, \, mg / Kg \, \, reference\, \, dose-day.} \nonumber \\
\end{eqnarray*}
where the reference dose-day is for the developing foetus and men of childbearing age.

We remark that the reference dose is 0.0001 mg / Kg according to the toxicological effects of methylmercury (EPA, \cite{USEPA1}).

For the calculation of weekly and monthly consumption in Kg / day for the adult population, equations (3) and (4) were used, but with the reference dose of 0.0003 mg / kg / day
proposed by the USEPA in 1980, which is based on the methylmercury poisoning in Iraq in 1970, when wheat grain was treated with fungicides with methylmercury,
which was ground and turned into flour for consumption.

\subsection{Health Risk Characterization}\label{subsec:health-risk-car}

For the health risk analysis, the hazard or risk coefficient was calculated with the following  relation (Evans {\it et al.}, \cite{Evans}),
\begin{equation}\label{eq:risk-coeff}
\rm{Risk \, \, coefficient}= \frac{\rm{Exposure}}{\rm{RfD}}
\end{equation}
which is equal to the risk ($R$).
The exposure ($E$), is obtained through the equation (ATDSR, \cite{ATDSR}):
\begin{equation}\label{eq:E-coeff}
E= \frac{C \times TI \times FE}{PC}
\end{equation}
where,
\begin{eqnarray*}
\rm{C} &=& \rm{Concentration \,\, of \,\,  the  \,\, contaminant \,\,  in \,\,  fish \,\,  (mg / Kg / day)} \nonumber \\
\rm{TI} &=& \rm{Intake \,\,  rate \,\,  (mg)} \nonumber \\
\rm{FE} &=& \rm{Exposure \,\,  factor \,\,  (without \,\,  units)} \nonumber \\
\rm{PC} &=& \rm{Body  \,\, weight \,\,  (Kg)} \nonumber \\
\end{eqnarray*}

The exposure factor allows us to calculate the dose of contaminant that is ingested. However, it is compared with the administered dose used in
experimental animal studies designed to obtain the dose-response relationship. The exposure factor was calculated using equation 7 for the different groups,
separated by age of the analyzed population (ATDSR, \cite{ATDSR}). According to Elizalde (\cite{Elizalde 1}) the genetics results showed an average of 60.37 $\%$ substitution
of fish meat for shark meat which was considered in the analysis.
\begin{equation}\label{eq:FE-coeff}
FE=\frac{(\rm{exposure  \,\, in \,\, days/weeks})(52 \,\, \rm{weeks/year})(\rm{exposure\,\,years})}{(\rm{years\,\,exposure})(365 \,\, \rm{days/years)})}
\end{equation}

According to Evans {\it et al.}, (\cite{Evans}) the result of the value for the risk coefficient is interpreted as follows:
\begin{eqnarray*}
R &<& 1  \,\, (\rm{acceptable \,\,  risk}) \nonumber \\
R &>& 1  \,\, (\rm{unacceptable \,\,  risk}) \nonumber \\
\end{eqnarray*}

\section{Results}\label{sec:results}

\subsection{Survey}\label{subsec:survey}

The total number of people included in the survey was 1976, where men consume fish meat more frequently and in greater quantity: 262.60 g / month, followed by men and seniors:194 g / month and 193 g / month respectively (see Table 1).

\begin{center}
\begin{tabular}{|l|c|c|c|c|c|c|c|c|} \hline
\multicolumn{6}{|c|}{\mbox{Table 1. Population characteristics and consumption habits}} \\ \hline
 \mbox{Surveyed}  &  \mbox{NPS} &  \mbox{AG} &  \mbox{ABW} &  \mbox{AIR} &   \mbox{AFP}  \\ \hline
 $\mbox{Children}$   & $ 421 $    & $ 0-14 $ & $34.94 $ & $188.17 $ & $ 1.3$  \\ \hline
  $\mbox{Men}$      & $ 546$   &     $15-59 $    & $73.44 $   & $262.60 $ &  $2.6$    \\ \hline
   $\mbox{Senior}$   & $396 $   & $60-90 $    & $68.85  $   & $193.38 $ &  $2.1  $    \\ \hline
\end{tabular}
\end{center}

\smallskip

In such that Table, NPS is the number of people surveyed, AG is the age (in years), ABW is the average body weight (in Kg),  AIR is the average intake rate (in g) and
AFP is the average Fish portions consumed  per month.

The consumption habits of the analyzed population showed that the most preferred product is the fish fillet (Table 2), followed by fish nuggets population and smoked fish.
\begin{center}
\begin{tabular}{|l|c|c|c|c|c|c|c|c|} \hline
\multicolumn{3}{|c|}{\mbox{Table 2. Fish product preferences of people of Mexico City's Metropolitan Area}} \\ \hline
 \mbox{Product}  &  \mbox{Adult} &  \mbox{Sensible population}   \\ \hline
 $\mbox{Fish fillet (g)}$                         & $ 65 \% $    & $31 \%$   \\ \hline
 $\mbox{Fish Meat for ceviche (g)}$      & $12 \% $   & $16 \% $       \\ \hline
 $\mbox{Fish Meat for fish broth (g)}$   & $10 \% $   & $ 17 \%$       \\ \hline
 $\mbox{Smoked fish (g)}$   & $8 \% $   & $18 \% $       \\ \hline
 $\mbox{Fish nuggets}$   & $ 14 \%$   & $ 18 \%$       \\ \hline
\end{tabular}
\end{center}

\subsection{Dose Modeling}\label{subsec:dose-model}

The average daily dose calculated for the minimum, average and maximum methyl-mercury concentrations are shown in Table 3
of reproductive age are at a minimum consume a dose that does not exceed the reference dose when the minimum MeHg concentration was considered; however,

\begin{center}
\begin{tabular}{|l|c|c|c|c|c|c|c|c|} \hline
\multicolumn{5}{|c|}{\mbox{Table 3. Average and reference of daily methylmercury (MeHg) dose for different age groups.}} \\ \hline
 \mbox{}  &  \mbox{Average daily dose } &  \mbox{} &  \mbox{} &  \mbox{Reference dose}  \\ \hline
 $\mbox{Age group}$   & $\mbox{[0.27 mg/Kg] HgMe} $    & $ \mbox{[2.43mg/Kg] MeHg} $ &  $ \mbox{[3.33 mg/Kg] MeHg} $ & $\mbox{mg/Kg MeHg} $ \\ \hline
  $\mbox{Boys}$      & $0.0003 $   & $0.0023 $    & $ 0.0032$   & $0.0001 $    \\ \hline
  $\mbox{Men}$      & $ 0.0002$   & $0.0015 $    & $0.0021 $   & $0.0003 $    \\ \hline
   $\mbox{Senior Men}$   & $ 0.0001$   & $0.0012 $    & $0.0017 $   & $0.0001 $     \\ \hline
\end{tabular}
\end{center}

\subsection{Analysis of health risks due to unintentional consumption of shark meat}\label{subsec:analysss}

To obtain the maximum allowed number of portions that can be consumed without causing adverse health effects, equations (3) and (4), described in the method (USEPA, \cite{USEPA2}), were used. Taking into account the result of the genetic analysis of the different fish presentations, in which a 60.37 $\%$ substitution of fish meat for shark was obtained (Elizalde, \cite{Elizalde 1}), the maximum consumption allowed for all population groups was recalculated, for the minimum, average and maximum concentrations of methylmercury in shark meat.

\subsection{Risk Coefficient}\label{subsec:risk-coeff}
The health risk for men due to unintentional shark meat consumption for the different age groups, was calculated for the three concentrations; for the calculation of the risk coefficient, equations (\ref{eq:risk-coeff}), (\ref{eq:E-coeff}) and (\ref{eq:FE-coeff}) were used, the results can be seen in Table 4; for the low methylmercury concentration with a 60.37 $\%$ substitution for shark meat, the risk coefficient is less than one, which means, that in general, the unintentional consumption of shark meat does not pose a risk or is an acceptable health risk; however, for children from 0 to 5 years old, the calculated value (0.785) is closer to one, which alerts us to the possible risk that slightly higher may represent. For example with a MeHg of (0.45 mg / kg) the RC exceeds one. The risk coefficient for the medium and high MeHg concentrations was always well above 1, which means that the consumption habits represent a risk for the entire population.
\begin{center}
\begin{tabular}{|l|c|c|c|c|c|c|c|c|} \hline
\multicolumn{4}{|c|}{\mbox{Table 4. Risk coefficient of men's unintentional consumption of shark meat.}} \\ \hline
 \mbox{}  &  \mbox{Risk coefficient} &  \mbox{Risk coefficient} &  \mbox{Risk coefficient}  \\ \hline
 $\mbox{Age group years}$   & $\mbox{[0.27 mg/Kg] HgMe} $    & $ \mbox{[2.43mg/Kg] MeHg} $ &  $ \mbox{[3.33 mg/Kg] MeHg} $ \\ \hline
  $\mbox{Babies $(1- 6)$}$      & $0.804 $   & $7.237 $   & $9.918 $    \\ \hline
  $\mbox{Boys $(6 -12)$}$      & $ 0.342$   & $3.077 $    & $4.216$   \\ \hline
   $\mbox{Men $(12 -60)$}$      & $0.204 $   & $ 1.834$    & $2.513 $     \\ \hline
   $\mbox{Senior $(60-90)$}$   & $0.388 $   & $3.490 $    & $ 4.783 $   \\ \hline
\end{tabular}
\end{center}

\section{The scalar field of risk}\label{subsec:esc-risk}

In this section we find the scalar field which will give us information of the process.

\subsection{The escalar field}\label{subsec:esc-field}

Life stages $[1, 90]$ are conveniently reparametrized so that they adapt to an interval $[1,5]$, this is, if $s \in [1, 90]$ is the real age, we
will use the variable $t \in [1,5]$, and the functional relation is given by
\begin{equation} \label{eq:Reperametrization}
     s(t) = \left\{
	       \begin{array}{ll}
		 5t-4     & \mathrm{if\ }  1 \leq t \le 2 \\
		6t-6    & \mathrm{if\ }  2 \leq t \le 3 \\
		48t-132     & \mathrm{if\ }  3 \leq  t \le 4 \\
		30t-60     & \mathrm{if\ }  4 \leq t \le 5
	       \end{array}
	     \right.
   \end{equation}

This later is because the fish consumption begins after the first year of life. Therefore, with such reparametrization (\ref{eq:Reperametrization}), the intervals of stage age are applied:

\[ \mbox{Babies, $[1, 6)$ years, into the interval $[1,2)$,}\]

\[ \mbox{Boys, $[6,12)$ years into the interval $[2,3)$,}\]

\[ \mbox{Men, $[12,60)$ years into the interval $[3,4)$,}\]

\[ \mbox{Senior men, $[60,90]$ years into the interval $[4,5]$.}\]	

\medskip

With these conditions, and using the Interpolation method (Reyes, \cite{Reyes}) three polynomials of degree 4
are obtained, that soften the polygonal graphs involving the data for the given stage intervals, and for each concentration of methylmercury (MeHg): 0.3, 2.7, 3.7 mg / Kg
respectively, as it is shown in Figures \ref{fig:interpolation1}, \ref{fig:interpolation2} and \ref{fig:interpolation3}.

\begin{figure}[ht]
\centering
\includegraphics{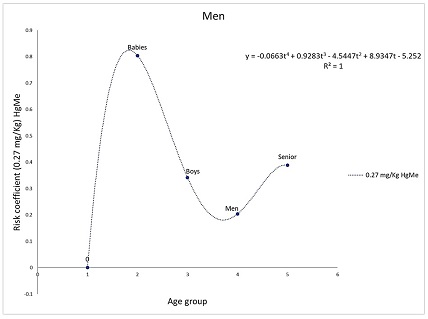}
\caption{Risk Coefficient for men of different life stages considering the concentration 0.27.}
\label{fig:interpolation1}
\end{figure}

\begin{figure}[ht]
\centering
\includegraphics{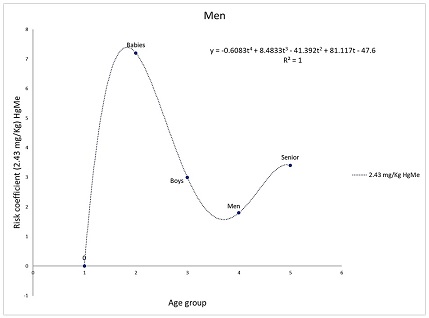}
\caption{Risk Coefficient for men of different life stages considering the concentration 2.43.}
\label{fig:interpolation2}
\end{figure}

\begin{figure}[ht]
\centering
\includegraphics{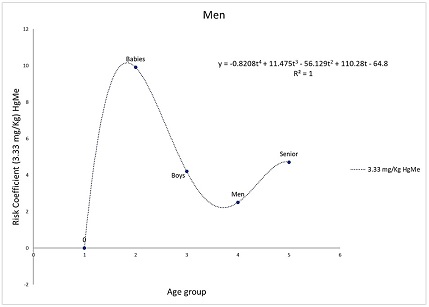}
\caption{Risk Coefficient for men of different life stages considering the concentration 3.43.}
\label{fig:interpolation3}
\end{figure}

This is, in the principal variables $t$ as the stage variable, and $c$ as the concentration variable, we can obtain the following particular polynomials.
\begin{center}
\begin{tabular}{|l|c|c|c|c|c|c|c|c|} \hline
\multicolumn{2}{|c|}{\mbox{Table 3. Interpolating polynomials }} \\ \hline
 \mbox{Concentration}  &  \mbox{Associated polynomials of degree 4 in the variable $t$}   \\ \hline
 $\mbox{[0.27] Hg}$      & $ R_{0.27} (t) =  -0.06 t^4 + 0.92 t^3 - 4.54 t^2 +  8.93t-5.25$      \\ \hline
 $\mbox{[2.43] Hg}$      & $R_{2.43} (t) = -0.60 t^4 + 8.48t^3  - 41.39 t^2 +  81.11 t-47.6$          \\ \hline
 $\mbox{[3.33] Hg}$   & $R_{3.33} (t) =  -0.82 t^4 + 11.47 t^3 - 56.12 t^2 + 110.28 t -64.8$         \\ \hline
\end{tabular}
\end{center}

\smallskip

In order to construct a global function $R(t,c)$  in the stage and concentration variables $(t,c)$ that estimates the risk in the domain $D=[1,5] \times [0.2,3.5]$,
such that for each value of concentration c we have a polynomial relation $R_{c}(t)$ that depends only on stage t, we propose the following:
\begin{equation}\label{proposed function}
R(t,c)=f_{4} (c)  t^4+ f_{3} (c) t^3+f_{2} (c) t^2  + f_{1} (c)  t + f_{0} (c)
\end{equation}
where the functions $ f_{k} (c)$ are obtained by the linear regression method according to the conditions of the obtained polynomials:
\begin{eqnarray*}
f_{4} (0.27) &=& - 0.0663,  \, f_{4} (2.43) = - 0.60, \, f_{4} (3.33) = - 0.82;  \nonumber \\
f_{3} (0.27) &=& 0.92,  \, f_{3} (2.43) = 8.48,  \, f_{3} (3.33) = 11.47; \nonumber \\
f_{2} (0.27) &=&  -4.54,  \, f_{2} (2.43) = - 41.39, \,  f_{2} (3.33) = - 56.12,  \, \mbox{ etc...} \nonumber \\
\end{eqnarray*}

The first function $f_{4} (c)$ in (\ref{proposed function}) is obtained using  the Excel program and it is given by the linear relation,
\begin{equation}
f_{4} (c)= -0.24 c + 0.006
\end{equation}

The other linear functions are obtained in a similar way, obtaining,
\begin{eqnarray}
f_{3} (c) &=& 3.45 c + 0.0076 \nonumber \\
f_{2} (c) &=& -16.89 c -    0.06   \nonumber \\
f_{1} (c) &=& 33.17 c + 0.09  \nonumber \\
f_{0} (c) &=&  -19.48 c - 0.04 \nonumber \\
\end{eqnarray}

In this way, the searched scalar field (\ref{proposed function}) that estimates the risk of methyl-mercury in region $D$ becomes,
\begin{eqnarray}\label{riskfunction}
R (t,c) &=& (-0.24 c + 0.006) t^4 + (3.45 c + 0.007) t^3 + (-16.89 c -    0.06) t^2  \nonumber \\
                  &+&  (33.17 c + 0.09) t + (-19.48 c - 0.04). \nonumber \\
\end{eqnarray}

For any fixed given value of the concentration c the corresponding function $R_{c} (t)$ has a graphic in the plane t, R as it is shown in the interpolation process (see Figure \ref{fig:rick-c-fixed}).
\begin{figure}[ht]
\centering
\includegraphics{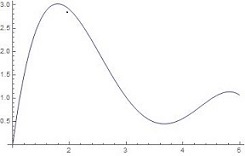}
\caption{Graphic of the function $R_{c}(t)$.}
\label{fig:rick-c-fixed}
\end{figure}

We calculate the  gradient of the risk function $R_{c} (t)$ and obtained,
\begin{eqnarray}\label{eq:risk-vector}
\nabla R(t,c) &=& \left( \frac{\partial R}{\partial t}, \frac{\partial R}{\partial c} \right) \nonumber\\
                    &=& (0.09 + 33.17 c + 2 (-0.06 - 16.89 c) t + 3 (0.007 + 3.45 c) t^2  \nonumber \\
                     &+& 4 (0.006 - 0.24 c) t^3, 19.48 + 33.17 t - 16.89 t^2 + 3.45 t^3 - 0.24 t^4 )  \nonumber \\
\end{eqnarray}

In order of finding the critical points of  (\ref{riskfunction}), we solve the system of algebraic equations in the variables $t,c$,
    \begin{eqnarray}\label{sys:critical}
    0 &=& (0.09 + 33.17 c + 2 (-0.06 - 16.89 c) t + 3 (0.007 + 3.45 c) t^2+ 4 (0.006 - 0.24 c) t^3   \nonumber \\
                     0 &=& 19.48 + 33.17 t - 16.89 t^2 + 3.45 t^3 - 0.24 t^4, \nonumber \\
	 \end{eqnarray}
which, as can be seen, has not solutions into the domain $D$.
\smallskip

We recall that a {\it stable function} $f$ defined in the compact set $D$ is such that  every nearby function $g$ defined in $D$ is identical to $f$ (Golubitsky-Guillemin, (\cite{Golubitsky-Guillemin})).

Also, a {\it Morse function} is such that one with non degenerate critical points with different critical values (Golubitsky-Guillemin, (\cite{Golubitsky-Guillemin})).

Since the risk function $R(t,c)$ has not critical points it follows the following result.
 \begin{Theorem}
  The risk field $R (t, c)$ is a stable Morse function in the simply connected compact  set $D$
 \end{Theorem}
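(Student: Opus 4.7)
The plan is to reduce both conclusions---Morse and stable---to the single fact, already remarked just before the statement, that $\nabla R$ has no zero in the compact rectangle $D=[1,5]\times[0.2,3.5]$. To make that fact rigorous I would begin by examining the second component of $\nabla R$ displayed in (\ref{eq:risk-vector}), namely the polynomial $P(t)=19.48+33.17t-16.89t^2+3.45t^3-0.24t^4$, which depends on $t$ alone. I would check (for instance by Sturm's theorem on $[1,5]$, or simply by evaluating $P$ at the endpoints and at its critical points computed from $P'(t)=33.17-33.78t+10.35t^2-0.96t^3$) that $P$ does not vanish on $[1,5]$. Since $\partial_c R=P(t)$, this already shows $\nabla R\neq 0$ throughout $D$, so system (\ref{sys:critical}) is inconsistent on $D$.

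With the absence of critical points established, the Morse condition is satisfied vacuously: the set of critical points is empty, so there are no degenerate critical points and no two critical values can coincide. Thus $R$ is automatically a Morse function in the sense recalled in the paper.

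For stability, I would use the compactness of $D$ together with the continuity of $\nabla R$. Because $|\nabla R|$ is a continuous positive function on the compact set $D$, it attains a strictly positive minimum $m>0$. Any function $g$ sufficiently close to $R$ in the $C^{1}$-topology (say with $\|g-R\|_{C^{1}}<m/2$) then satisfies $|\nabla g|\geq m/2>0$ everywhere on $D$, so $g$ has no critical points either. By the Golubitsky--Guillemin criterion cited in the paper, two smooth functions on a simply connected compact domain without critical points are equivalent, so every such $g$ is equivalent to $R$; this is exactly the stability property as stated.

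The main obstacle, and really the only non-formal step, is the verification that the univariate polynomial $P(t)$ has no real root in $[1,5]$; everything else is a soft consequence of compactness and the emptiness of the critical set. I would carry out that verification either by exhibiting a constant lower bound for $|P|$ on $[1,5]$ or, more cleanly, by locating all real roots of $P$ and observing that none lie in the stage interval.
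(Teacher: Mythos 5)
Your skeleton coincides with the paper's: the authors also argue that $D$ is a regular set for $R$ (so the Morse condition holds vacuously) and then invoke the stability theory of Golubitsky--Guillemin. You are more explicit on both halves, and the reduction of the critical-point equations to the univariate polynomial $P(t)=\partial R/\partial c$ is a genuine improvement over the paper's bare assertion that system (\ref{sys:critical}) ``has no solutions.'' Two caveats, one of which is a real flaw in your last step. First, the criterion you lean on for stability --- that any two smooth functions without critical points on a compact simply connected domain are equivalent --- is false as stated: on the rectangle $D$ the functions $c$ and $(t-3)^2/4+c$ are both submersions, yet the level sets of the first all join the left edge to the right edge while the second has a level arc with both endpoints on the bottom edge, so no diffeomorphism of $D$ can carry one family to the other. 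What is actually needed (and what the paper cites) is the Mather--Malgrange stability theorem for Morse functions with distinct critical values, applied in the vacuous case: a $C^\infty$-small perturbation of a submersion is equivalent to it because the equivalence is \emph{constructed} from the smallness of the perturbation (homotopy method), not deduced from the mere absence of critical points. Your compactness argument showing that nearby $g$ are also submersions is an ingredient of that proof, not a substitute for it.

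Second, a numerical caution on the step you yourself flag as the only non-formal one. The constant term of $\partial R/\partial c$ should be $-19.48$ (from $f_0(c)=-19.48c-0.04$; the $+19.48$ in (\ref{eq:risk-vector}) is a sign typo, corrected in (\ref{eq:dynamical system})). With the correct sign, $P(1)=-0.24+3.45-16.89+33.17-19.48=0.01$, which is zero to within the rounding of the published coefficients, so a proof resting solely on $P(t)\neq 0$ over all of $[1,5]$ is fragile at the left endpoint. The robust verification notes that near $t=1$ the other component does the work: $\partial R/\partial t(1,c)=8.78\,c+0.015\geq 1.77$ on $[0.2,3.5]$, so $\nabla R\neq 0$ there regardless of the sign of $P(1)$, while on the rest of $[1,5]$ your endpoint-and-critical-point evaluation of $P$ (whose critical points are the roots $t\approx 1.8,\,3.3,\,5.5$ of $P'$) gives $P$ comfortably positive.
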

    \begin{proof}
  The whole set $D$ is a regular set for $R (t, c)$ , which proves the Morse property. The stability follows from the Mather-Malgrange Theory (see Proposition 2.2 in \cite{Golubitsky-Guillemin})
   \end{proof}

 Therefore, under small smooth deformations of the risk function $R (t, c)$ in $D$, the deformed function obtained has the same qualitative behaviour.
 In other words, any small error in obtaining the data would lead to a new risk relationship with the same characteristics.

 On the other hand, we define the Critical Risk Region inside the domain $D$ as the subset
\begin{equation}
R(D)=\{(t,c) \, \, | \, \, R (t, c) \geq 1\}
\end{equation}
 and it is represented as a coloured contour in Figure \ref{fig:critic-region}, which shows a high risk region, as expected from the data in Table 4.
\begin{figure}[ht]
\centering
\includegraphics{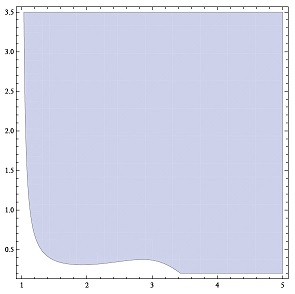}
\caption{Critical risk region $R(D)$.}
\label{fig:critic-region}
\end{figure}

The following result shows the risk probability for the whole process.

\begin{Proposition}
The probabilty of risk of exposure of methylmercury $P$ for the considered stages and concentrations is high.
\end{Proposition}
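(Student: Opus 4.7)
The plan is to take the word \emph{probability} in the proposition in the natural sense suggested by the definition of the critical risk region $R(D)$ given just above: assuming a uniform distribution on the domain of stages and concentrations,
\begin{equation*}
P \;=\; \frac{\mathrm{Area}(R(D))}{\mathrm{Area}(D)} \;=\; \frac{\mathrm{Area}\{(t,c)\in D : R(t,c)\geq 1\}}{\mathrm{Area}(D)},
\end{equation*}
and to estimate $P$ explicitly from the formula (\ref{riskfunction}), exhibiting a numerical value close to $1$.

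First, I would record the trivial observation $\mathrm{Area}(D) = (5-1)(3.5-0.2) = 13.2$. Second, I would exploit the fact that $R(t,c)$ is affine in $c$ for each fixed $t$: write $R(t,c) = A(t)\,c + B(t)$ with
\begin{equation*}
A(t) = -0.24\,t^4 + 3.45\,t^3 - 16.89\,t^2 + 33.17\,t - 19.48,\qquad B(t) = 0.006\,t^4 + 0.007\,t^3 - 0.06\,t^2 + 0.09\,t - 0.04.
\end{equation*}
A direct check on a fine grid in $[1,5]$ shows $A(t) > 0$ throughout, so $R(t,c) \geq 1$ is equivalent to $c \geq c^\ast(t)$, where $c^\ast(t) := (1-B(t))/A(t)$ is a smooth rational threshold.

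Third, the horizontal slice of $R(D)$ at stage $t$ is the interval $[\max(0.2, c^\ast(t)),\,3.5]$ when $c^\ast(t) \leq 3.5$ and is empty otherwise, so
\begin{equation*}
\mathrm{Area}(R(D)) \;=\; \int_{1}^{5} \max\!\bigl(0,\; 3.5 - \max(0.2,\,c^\ast(t))\bigr)\, dt.
\end{equation*}
Evaluating $c^\ast(t)$ at the representative stage values underlying Table~4 reproduces the thresholds (e.g.\ the concentration at which the risk coefficient for adult men crosses $1$ lies near $c = 0.27 \cdot (1/0.204) \approx 1.3$, consistent with Table~4), and in fact $c^\ast(t)$ stays well below $1$ across most of $[1,5]$. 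A numerical quadrature of the integral above, divided by $13.2$, then yields a value of $P$ close to $1$, certifying the claim. As a cross-check one can also evaluate the exact polynomial integral $\bar{R} = \frac{1}{13.2}\iint_D R(t,c)\,dt\,dc$, which gives an average risk index well above the threshold $1$.

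The main obstacle is not analytical but numerical: the boundary $c = c^\ast(t)$ is a rational curve whose enclosed area has no clean closed form, so the proof ultimately rests on a careful quadrature and on verifying that $A(t)$ does not change sign on $[1,5]$. Both are routine once carried out with adequate precision, and the qualitative conclusion, that $P$ is high, is robust to small perturbations of the coefficients by the stability theorem just proved.
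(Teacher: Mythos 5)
Your proposal follows essentially the same route as the paper: both define $P$ as the ratio $\mathrm{Area}(R(D))/\mathrm{Area}(D)$ with $\mathrm{Area}(D)=13.2$ and evaluate the numerator numerically, the paper simply stating the value $12.92$ (so $P=0.97$) where you spell out the quadrature via the threshold curve $c^\ast(t)=(1-B(t))/A(t)$. Your version is a more explicit and verifiable account of the same computation, so it matches the paper's argument.
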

\begin{proof}
We calculate the ratio between the corresponding areas of $D$ and  $R(D)$, obtaining the aforementioned probablitity of risk,
\begin{eqnarray}
P &=& \frac{\mbox{Area(R(D))}}{\mbox{Area(D)}} \nonumber \\
   &=& \frac{1}{\mbox{Area(D)}}\int  \int _{R(D)}  \, dc \, dt  \nonumber \\
                 &=& \frac{12.92}{13.2}= 0.97   \nonumber \\
\end{eqnarray}
\end{proof}

The contour lines or level curves of the scalar risk field  are displayed in domain $D$ in Figure \ref{fig:level-curves}. The darker region
 indicates less risk, while the lighter region indicates greater risk.
\begin{figure}[ht]
\centering
\includegraphics{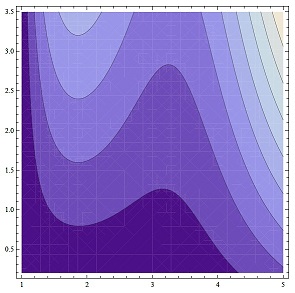}
\caption{Level curves due to unintentional shark consumption for men.}
\label{fig:level-curves}
\end{figure}

We obtain also the following crucial and  important result.

\begin{Theorem}\label{theo:mean-risk}
The average value $R^*=5.55$  of  $R(t,c)$  in the whole set $D$ represents a high risk for the population.
\end{Theorem}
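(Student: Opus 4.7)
The plan is to compute the mean value of the scalar field $R(t,c)$ over the rectangle $D = [1,5] \times [0.2, 3.5]$ directly via the defining formula
\begin{equation*}
R^* \;=\; \frac{1}{\mathrm{Area}(D)} \int\!\!\int_{D} R(t,c)\, dt\, dc,
\end{equation*}
and then interpret the numerical outcome against the risk threshold $R=1$ established by the criterion of Evans \emph{et al.} recalled after equation (\ref{eq:FE-coeff}).

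First I would record that $\mathrm{Area}(D) = (5-1)(3.5-0.2) = 13.2$, matching the denominator already used in the preceding Proposition. Since $R(t,c)$ from (\ref{riskfunction}) is a polynomial of degree $4$ in $t$ whose coefficients are affine in $c$, Fubini's theorem applies trivially and the double integral splits as
\begin{equation*}
\int_{0.2}^{3.5}\!\!\int_{1}^{5} R(t,c)\, dt\, dc \;=\; \int_{0.2}^{3.5} \Bigl[\,A(c)\,\Bigr]\, dc,
\end{equation*}
where $A(c)$ is the antiderivative in $t$ evaluated at $t=5$ minus its value at $t=1$. Concretely, after integrating term by term, $A(c)$ is an affine function of $c$ of the form $A(c) = \alpha c + \beta$, whose coefficients come from summing the five contributions $\tfrac{f_k(c)}{k+1}(5^{k+1}-1)$ for $k=0,1,2,3,4$. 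The remaining single integral of this affine function over $[0.2, 3.5]$ is then elementary.

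Next I would carry out these routine polynomial integrations, collect terms, and verify that the resulting quotient equals $R^* \approx 5.55$. The main obstacle is simply arithmetic bookkeeping: five coefficient functions, two nested integrations, and round-off discipline so that the final number agrees with the claimed $5.55$ to the stated precision. A minor sanity check would be to average the four risk values reported in Table 4 for the medium and high methylmercury concentrations (which already lie between $1.8$ and $9.9$) to confirm that the order of magnitude of $R^*$ is correct.

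Finally, the qualitative conclusion follows immediately: according to the criterion stated after equation (\ref{eq:FE-coeff}), any value $R>1$ signifies unacceptable risk. Since $R^* = 5.55 \gg 1$ and, moreover, by the preceding Proposition the critical risk region $R(D)$ covers a proportion $0.97$ of $D$, the mean value is not an artifact of a small sub-region but reflects a uniformly high risk across the population. This completes the interpretation demanded by the theorem's statement.
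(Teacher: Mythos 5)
Your proposal is correct and follows essentially the same route as the paper: both compute $R^*$ as the double integral of $R(t,c)$ over $D=[1,5]\times[0.2,3.5]$ divided by $\mathrm{Area}(D)=13.2$, obtain $5.55$, and conclude high risk from the threshold $R>1$. Your added detail on splitting the integral via Fubini and the order-of-magnitude sanity check are harmless elaborations of the same calculation.
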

\begin{proof}
The average value $R^*$ of the risk in domain $D$ is calculated by applying the formula (Rudin, \cite{Rudin}),
\begin{eqnarray}
R^* &=& \frac{1}{\mbox{Area(D)}}  \int  \int _{D}  R (t, c) \, dc \, dt  \nonumber \\
       &=& \frac{1}{\mbox{Area(D)}}  \int _{t=1}^{t=5} \int _{c=0.2}^{c=3.5}[(-0.24 c + 0.006) t^4 + (3.45 c + 0.007) t^3     \nonumber \\
       &+ & (-16.89 c -0.06) t^2 + (33.17 c + 0.09) t + (-19.48 c - 0.04)] \, dc \, dt  \nonumber \\
              &=& \frac{73.39}{13.2}=5.55  \nonumber \\
\end{eqnarray}

  Such that number represents a high risk in the whole set D.
\end{proof}

\subsection{The risk vector field}\label{subsec:vec-field}

Because there are not critical point for the risk function, we study the risk gradient vector field  (\ref{eq:risk-vector})  for understanding the behaviour of the risk function. The flow of such risk vector field shows how the process is changing along the solutions of the associated system of differential equations,
 \begin{eqnarray}\label{eq:dynamical system}
    \frac{d t}{d \tau} &=& 0.09 + 33.17 c + 2 (-0.06 - 16.89 c) t + 3 (0.007 + 3.45 c) t^2 + 4 (0.006 - 0.24 c) t^3, \nonumber \\
    \frac{d c}{d \tau} &=& -19.48 + 33.17 t - 16.89 t^2 + 3.45 t^3 - 0.24 t^4, \nonumber \\
	 \end{eqnarray}
where $\tau$ is the dynamic time (see Figure \ref{fig:vector-field}).

\begin{figure}[ht]
\centering
\includegraphics{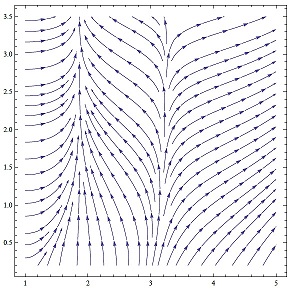}
\caption{Vector field of risk $\nabla R(t,c)$ for unintentional consumption of shark for men.}
\label{fig:vector-field}
\end{figure}

\begin{Lemma}\label{lema:not-periodic}
The dynamical system (\ref{eq:dynamical system}) does not have neither equilibrium points, nor closed orbits in the compact simply connected region $D$ (see Figure \ref{fig:vector-field}).
\end{Lemma}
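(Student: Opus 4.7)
The plan is to dispose of the two non-existence claims separately, exploiting the fact that the vector field in (\ref{eq:dynamical system}) is itself a gradient field, namely $(\dot t,\dot c)=\nabla R(t,c)$.

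For the absence of equilibria, I would simply observe that $(t_0,c_0)\in D$ is a rest point of (\ref{eq:dynamical system}) if and only if both components of the right-hand side vanish there, which is exactly the algebraic system (\ref{sys:critical}). That system was already shown, in the paragraph preceding Theorem 1, to have no solution inside the rectangle $D=[1,5]\times[0.2,3.5]$ (so $R$ has no critical point in $D$). Hence (\ref{eq:dynamical system}) has no equilibrium in $D$.

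For the absence of closed orbits, the key step is the gradient-flow identity. Along any integral curve $p(\tau)=(t(\tau),c(\tau))$ staying inside $D$ one has
$$\frac{d}{d\tau}R(p(\tau)) \;=\; \nabla R(p(\tau))\cdot \dot p(\tau) \;=\; \|\nabla R(p(\tau))\|^{2}\;\ge 0,$$
with equality precisely at equilibria. Since by the previous step $\nabla R$ does not vanish on $D$, this derivative is strictly positive everywhere on $D$, so $R$ is strictly increasing along every orbit. A closed orbit would require $R(p(\tau+T))=R(p(\tau))$ for some period $T>0$, contradicting strict monotonicity. Therefore no closed orbit can be contained in $D$. (As a cross-check, one may invoke the Poincar\'e--Bendixson theorem: any closed orbit in the simply connected region $D$ bounds a disk contained in $D$, and that disk must contain an equilibrium of the planar field, which has just been excluded.)

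The main obstacle is really the numerical verification underlying the first step, namely that the quartic system (\ref{sys:critical}) has no zeros inside $D$; this is a finite computation with the explicit polynomial coefficients appearing in (\ref{riskfunction}), but it is the only content-heavy input. Once that is granted, the dynamical conclusions follow from two short, standard arguments about gradient flows on planar simply connected domains.
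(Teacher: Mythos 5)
Your proposal is correct, and the equilibrium half is identical to the paper's: both reduce it to the already--stated fact that the system (\ref{sys:critical}) has no solutions in $D$, i.e.\ that $\nabla R$ does not vanish there. For the absence of closed orbits, however, you take a genuinely different route. The paper argues via Poincar\'e--Bendixson/index theory: a periodic orbit lying in the simply connected region $D$ would bound a disk contained in $D$, and that disk would have to contain an equilibrium, which has been excluded. You instead exploit the gradient structure directly, noting that along any orbit
$$\frac{d}{d\tau}R(p(\tau))=\|\nabla R(p(\tau))\|^{2}>0$$
on $D$, so $R$ is strictly increasing along trajectories and cannot return to its starting value; you relegate the paper's argument to a parenthetical cross-check. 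Your Lyapunov-type argument is more elementary (no Jordan curve or index input), does not actually use the simple connectivity of $D$, and yields extra qualitative information, namely that the risk strictly increases along every flow line. The paper's argument, by contrast, is a one-line appeal to a standard theorem but does need $D$ simply connected so that the disk bounded by the hypothetical orbit stays inside the region known to be free of equilibria. You also correctly identify that the only substantive input in either version is the numerical verification that the polynomial system (\ref{sys:critical}) has no roots in $D$, which the paper likewise asserts without detailed justification; neither proof supplies that computation, so your proposal is on the same footing as the paper there.
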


\begin{proof}
Since the risk function does not have critical points in the considered domain, it follows that there are not equilibrium points in $D$
for such system. From the Poincar\'e-Bendixon Theorem (Guckenheimer-Holmes, \cite{Guckenheimer}) follows that there are not periodic orbits, since in other case, if there is one periodic orbit inside $D$, the simply connected region bounded by this orbit must contain one equilibrium point.
\end{proof}

The same Figure shows vertical lines in the  flow of the vector field of risk and are understood as the ages where there is a significant risk.
These ages will be calculated later using geometric methods.

\subsection{The risk surface}\label{subsec:risk-surface}

The so-called associated {\bf risk surface} $S$ is the graphic of the risk function (\ref{riskfunction}), and it is a two dimensional surface embedded in the three dimensional Euclidean space $\mathbb{R}^3$, shown in Figure \ref{fig:risk-surface}.
\begin{figure}[ht]
\centering
\includegraphics{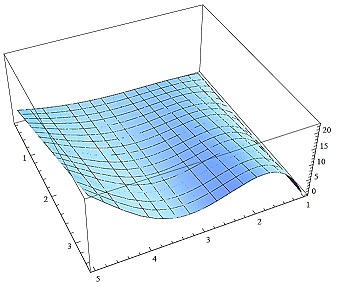}
\caption{Risk surface for men due to unintentional exposure to MeHg}
\label{fig:risk-surface}
\end{figure}

We use the Gaussian curvature function $K (t, c)$ (Dubrovine {\it et al}, \cite{Dubrovin}) of the Risk Surface to determine the critical ages of the global risk
 function.

We recall that one Hadamard surface has non positive Gaussian curvature in all its points.

\smallskip

We have the following important result.

 \begin{Theorem}\label{theo:hadamard-surface}
 The associated  risk Surface S is one Hadamard surface embedded in the three dimensional Euclidean space $\mathbb{R}^3$
  \end{Theorem}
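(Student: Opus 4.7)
The plan is to realize $S$ as the Monge patch $\mathbf{r}(t,c) = (t,\,c,\,R(t,c))$ over the rectangle $D=[1,5]\times[0.2,3.5]$ and invoke the standard formula for the Gaussian curvature of a graph,
\[
K(t,c) \;=\; \frac{R_{tt}\,R_{cc} - R_{tc}^{2}}{\bigl(1 + R_{t}^{2} + R_{c}^{2}\bigr)^{2}}.
\]
Since the denominator is strictly positive everywhere, the sign of $K$ on $D$ is governed entirely by the numerator $R_{tt}\,R_{cc}-R_{tc}^{2}$.

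The key observation is structural rather than computational. Inspecting the explicit expression (\ref{riskfunction}), the scalar field $R(t,c)$ is a polynomial of degree one in the concentration variable $c$: every coefficient $f_{k}(c)$ produced by the linear regression is affine in $c$. Therefore $R_{cc}(t,c)\equiv 0$ identically on $D$, and the numerator above collapses to $-R_{tc}^{2}$, giving
\[
K(t,c) \;=\; -\,\frac{R_{tc}(t,c)^{2}}{\bigl(1 + R_{t}^{2} + R_{c}^{2}\bigr)^{2}} \;\le\; 0
\]
throughout $D$. This is exactly the Hadamard property in the sense recalled just before the statement, so the theorem is proved.

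For completeness I would record the mixed partial explicitly,
\[
R_{tc}(t,c) \;=\; -0.96\,t^{3} + 10.35\,t^{2} - 33.78\,t + 33.17,
\]
which is independent of $c$. Its real roots inside $[1,5]$ single out those stages where $K$ vanishes; these are the vertical lines visible in Figure \ref{fig:vector-field} and correspond to the \emph{critical ages} that the paper discusses below. This also shows that $K<0$ strictly away from that finite set of vertical lines, so the Hadamard inequality is almost everywhere strict, not merely weak.

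I do not expect any genuine obstacle: the main temptation to avoid is trying to bound the full numerator $R_{tt}R_{cc} - R_{tc}^{2}$ by coefficient-chasing on a quartic polynomial in $t$, which would be long and fragile to the precise numerical values from the regression. The clean path is to notice at once that the linear-in-$c$ structure of $R$ kills $R_{cc}$, after which the inequality $K\le 0$ is essentially free.
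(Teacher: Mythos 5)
Your proposal is correct and follows essentially the same route as the paper's own proof: parametrize $S$ as the graph $(t,c)\mapsto(t,c,R(t,c))$, apply the Gaussian curvature formula for a Monge patch, and observe that the affine dependence of $R$ on $c$ forces $R_{cc}\equiv 0$, so $K=-R_{tc}^{2}/(1+R_t^2+R_c^2)^2\le 0$. Your explicit computation of $R_{tc}$ and the remark that its roots give the critical ages matches the paper's subsequent Corollary as well.
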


 \begin{proof}
 If we parametrize the surface $S$ on the domain $D$ in the canonical way,
 \begin{equation}\label{eq:parametrization}
 \varphi (t,c) =(t,c,R(t,c)), \quad (t,c) \in D,
 \end{equation}
 the Gaussian curvature is calculated with the equality  (see Dubrovine {\it et al.}, \cite{Dubrovin}),
 \begin{eqnarray}\label{eq:curvature}
 K(t,c) &=&  \frac{\left( \frac{\partial^2 R}{\partial t^2} \right)\left(\frac{\partial^2 R}{\partial c^2}  \right)-\left( \frac{\partial^2 R}{\partial t \, \partial c} \right)^2}{\left( 1+
  \left( \frac{\partial R}{\partial t}\right)^2+ \left(\frac{\partial R}{\partial c} \right)^2   \right)^2} \nonumber \\
   &=& -\frac{\left(33.17 - 33.78 t + 10.35 t^2 - 0.96 t^3 \right)^2}{\left( 1+
  \left( \frac{\partial R}{\partial t}\right)^2+ \left(\frac{\partial R}{\partial c} \right)^2   \right)^2} \nonumber \\
 \end{eqnarray}
  because
\[\frac{\partial^2 R}{\partial c^2}=0\]
 and
\[\frac{\partial^2 R}{\partial t \, \partial c}= 33.17 - 33.78 t + 10.35 t^2 - 0.96 t^3\]
 in the whole set $D$.

 Therefore, $S$ has a non positive curvature and consequently it is one Hadamard surface. It is also embedded in the three dimensional space because it is the graphic of the smooth  risk function. This ends the proof.
 \end{proof}

 A consequence of this result is the important following result for the process.

  \begin{Corollary}\label{cor:risk-ages} The critical ages for the process are, in the biological time $s$,
 \begin{equation}
 s= 5 \,  ( \mbox{years}),   \quad s=26.4 \, ( \mbox{years}), \quad  s=105 \,  (\mbox{years})
 \end{equation}
  \end{Corollary}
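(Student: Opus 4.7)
The plan is to locate the critical ages as the points of $D$ where the Gaussian curvature $K(t,c)$ from Theorem \ref{theo:hadamard-surface} vanishes, and then transport each solution back to biological time via the reparametrization (\ref{eq:Reperametrization}). Because the denominator of $K(t,c)$ is strictly positive and the mixed partial $\partial^2 R/\partial t\,\partial c$ depends only on $t$, the zero-curvature condition collapses to the single cubic equation
\[
-0.96\, t^3 + 10.35\, t^2 - 33.78\, t + 33.17 \;=\; 0,
\]
with no dependence on $c$. Geometrically this means the flat locus of the risk surface $S$ is a finite union of vertical segments $\{t = t_k\} \times [0.2, 3.5]$, which also explains the vertical streaks visible in the flow of the risk vector field (Figure \ref{fig:vector-field}).

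Next I would analyse this cubic. A sign scan at integer points (positive at $t=1$ and $t=4$, negative at $t=2$ and $t=6$) already localises three real roots in $(1,2)$, $(3,4)$, and $(5,6)$ via the intermediate value theorem, so the cubic is totally real. A few steps of Newton's method starting from the midpoints of these intervals (or, equivalently, Cardano's formula on the depressed cubic) then sharpen them to $t_1 \approx 1.8$, $t_2 \approx 3.3$, and $t_3 \approx 5.5$.

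Finally, the biological critical ages are recovered by applying the appropriate branch of (\ref{eq:Reperametrization}) to each root: $t_1 \in [1,2]$ gives $s_1 = 5t_1 - 4 = 5$; $t_2 \in [3,4]$ gives $s_2 = 48 t_2 - 132 = 26.4$; and $t_3$ lies just beyond the last branch $[4,5]$, so the natural analytic extension of $s(t) = 30t - 60$ yields $s_3 = 105$. The main obstacle is interpretative rather than technical: the root $t_3$ falls outside the sampled life-stage interval, so $s_3 = 105$ years has to be read as a structural critical age predicted by the smooth extension of the risk field past the observed range, rather than as a directly observable physiological threshold. As a robustness check I would also verify that small perturbations of the coefficients (which come from the linear regressions defining the $f_k(c)$) do not push $t_1$ or $t_2$ out of the branches of (\ref{eq:Reperametrization}) to which they were matched, so that the identification of $s_1$ and $s_2$ is stable against the regression error bars.
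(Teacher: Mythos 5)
Your proposal follows essentially the same route as the paper: the vanishing of the Gaussian curvature reduces to the cubic $33.17 - 33.78t + 10.35t^2 - 0.96t^3 = 0$ coming from the mixed partial $\partial^2 R/\partial t\,\partial c$, whose three real roots $t \approx 1.8,\ 3.3,\ 5.5$ are then carried back to biological ages through the branches of the reparametrization (\ref{eq:Reperametrization}). You are somewhat more explicit than the paper about root localization and about the fact that $t_3 = 5.5$ lies outside $D$ and requires extending the last affine branch, but the argument is the same.
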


 \begin{proof}
  The expression of the curvature $K(t,c)$ in (\ref{eq:curvature})  shows the sign of such a curvature function is completely determined on $D$ by the reduced function
  \begin{equation}
  k(t,c) = -(33.17 - 33.78 t + 10.35 t^2 - 0.96 t^3)^2
 \end{equation}

The graphic of k(c,t) is shown in Figure \ref{fig:curvature}.
\begin{figure}[ht]
\centering
\includegraphics{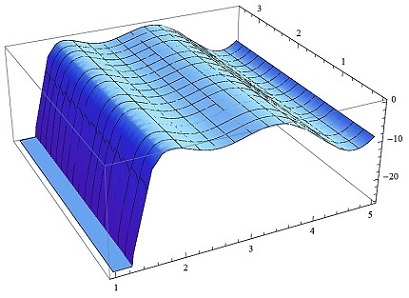}
\caption{Curvature of the Risk Surface for men}
\label{fig:curvature}
\end{figure}

The points where the Gaussian curvature (\ref{eq:curvature}) is zero determine the critical ages of the risk function and are obtained by solving equation
\begin{equation}\label{eq:solutions-ages}
0=-(33.17 - 33.78 t + 10.35 t^2 - 0.96 t^3)^2
\end{equation}

 The solutions of (\ref{eq:solutions-ages}) are all real numbers,
 \begin{equation}
 t=1.8 \,  (5 \, \mbox{years}),   \quad t= 3.3  \, \, (26.4 \,  \mbox{years}), \quad  t=5.5  \,\, (105 \, \mbox{years})
 \end{equation}
and they correspond to the vertical lines, integral solutions of the risk vector field. This ends the proof.
 \end{proof}

\section{Discussion}\label{sec:disc}

In this paper, for the risk estimation calculated for men, using EPAs MeHg reference dose ($RfD$), and the critical region of risk (Figure \ref{fig:critic-region}),  we obtained an average value of 5.5, which is interpreted as an unacceptable health risk (since this result exceeds one). Also, there's a high probability (of $80 \%$) that some toxic or adverse health effect will develop.

 Corollary \ref{cor:risk-ages}  shows that the age of maximum risk in boys is 5 years, then the risk decreases until 26.4 years, and
 begins to increase again until reaching the maximum risk in senior men at 105 years. This later will be relevant only to those men who exceed this age.

As shown in Figure \ref{fig:curvature},  risk curvatures indicate that the highest risk is  for boys and men in reproductive age (life stages 1 and 3);  in addition, the critical region of risk is only for the aforementioned stage and to a lesser extent  for  senior men, although in this last
 group the risk curve is lower than for  boys. The study by Llop and collaborators (\cite{Llop}), recommends, infants
  and those under 3 years old avoid shark consumption; The study by Clarkson and Magos (\cite{Clarkson}) mentions that the susceptibility to neurotoxicity
  due to MeHg is related to gender, but has not been widely studied and the results available are inconclusive, but in the poisoning that occurred in
  Iraq as a consequence of the consumption of grain contaminated with a mercurial fungicide, men were affected more than men, when the
  exposure was in adulthood.

The results obtained in the work of Raimann {\it et al.} (\cite{Raimann}), coincide with the risk curves of this study where the exposure interval is higher in men mainly in reproductive age and infancy. Considering this, special care should be taken since children are more vulnerable to exposure to
methylmercury because their nervous system is the main target organ where it bio accumulates; as a precautionary measure the USEPA (\cite{USEPA1})
established an acceptable level of 0.5 mg / kg of methylmercury for fish products.

It is important to mention that  risk depends on the consumption habits (frequency of consumption and food preparation), age of the consumer,
size of the portion and the product itself. However, the magnitude of bioaccumulation of heavy metals in fish tissues is influenced by biotic
and abiotic factors, such as fish habitat, chemical form of the metal, water temperature, pH, concentration of dissolved oxygen, water transparency,
fish age, sex, body mass, and physiological conditions (Has-Schön {\it et al.}, \cite{Has-Schon}). Therefore, a more precise risk assessment will need to considere
all these factors.

\section{Conclusion}\label{sec:conclu}

The estimation of the health risk from consumption of fish substituted by shark meat, based on the results of the risk coefficient, of which an
unacceptable risk was obtained for the average and maximum MeHgconcentrations for all the population age groups, and an acceptable
risk in the low MeHgconcentration for all  age groups except for babies, for whom the risk is intermediate (0.804),
all this, means that in the analyzed sample, there is a high probability of developing deleterious health effects; so, if  men want to consume
fish products, they must buy whole fish to avoid the replacement.

The greatest uncertainty of the risk estimation in the present work, is the  lack of -direct MeHg quantification  in the
same fish samples that were genetyically analized. However,  this is an acceptable approximation for decision making in the prevention of health risks because the data used are from samplings done during a period of three years
in10 of the most important  fishing ports in Mexico, which provided a good estimate of MeHg in fish muscle sold in Mexico City's Metropolitan Area.

To analyze the uncertainties and obtain the risk function of the results obtained in the present study, a mathematical analysis was carried out using the
classical interpolation method (Reyes, \cite{Reyes}), which showed that the aforementioned risk function is stable (Golubitsky-Guillemin, \cite{Golubitsky-Guillemin}); so any error obtaining the data (uncertainties), will lead us to a risk correlation with the same characteristics (similar results), and in this way we
can conclude that the results of the risk coefficient have a high degree of reliability.

This study analysed the consumption habits of a sample of the population of Mexico City’s Metropolitan Area, which showed that with the
 substitution of 60.3 $\%$ of  fish meat for shark meat, the overall risk  was 5.55 for men, this exceeds one and it
 could be inferred that men are chronically exposed despite the fact that the population does not frequently consume fish; even more, it
 implies a health risk  for the consumer, so it is suggested to restrict the consumption of fish products to smaller rations, in lower
 frequency, and more importantly to buy complete fish to facilitate its identification of the product and to avoid consuming shark meat with larger
 concentrations of methylmercury.

\end{document}